\documentclass[conference]{IEEEtran}
\newcommand{\bl}{\textcolor{black}}

\usepackage{amsmath,amsfonts,amssymb,amsthm}
\usepackage{algorithmic}
\usepackage{array}
\usepackage{subcaption}
\usepackage{graphicx}
\usepackage{multirow}
\usepackage{textcomp}
\usepackage{stfloats}
\usepackage{url}
\usepackage{verbatim}
\usepackage{graphicx}
\usepackage{color, soul}
\usepackage{bm}
\usepackage{hyperref}

\usepackage{cite}
\usepackage{xcolor}
\usepackage{xpatch}

\newtheorem{theorem}{Theorem}

\usepackage{algorithmic}
\usepackage[linesnumbered,ruled]{algorithm2e}
\usepackage{cuted}

\hyphenation{op-tical net-works semi-conduc-tor IEEE-Xplore}
\def\BibTeX{{\rm B\kern-.05em{\sc i\kern-.025em b}\kern-.08em
    T\kern-.1667em\lower.7ex\hbox{E}\kern-.125emX}}
\usepackage{balance}

\begin{document}

\title{User-centric Service Provision for Edge-assisted Mobile AR: A Digital Twin-based Approach}

{
    \author{
    \IEEEauthorblockN{Conghao~Zhou\IEEEauthorrefmark{1},~Jie~Gao\IEEEauthorrefmark{2},~Yixiang Liu\IEEEauthorrefmark{3},~Shisheng~Hu\IEEEauthorrefmark{1},~Nan~Cheng\IEEEauthorrefmark{4},~Xuemin (Sherman)~Shen\IEEEauthorrefmark{1}
        \IEEEauthorblockA{\IEEEauthorrefmark{1}Department~of~Electrical~and~Computer~Engineering,~University~of~Waterloo,~Canada
        \\\IEEEauthorrefmark{2}School of Information Technology, Carleton University,~Canada
        \\\IEEEauthorrefmark{3}School of Computer Science and Technology,~Xidian University,~China     
        \\\IEEEauthorrefmark{4}School~of~Telecommunications Engineering and the State Key Laboratory of ISN,~Xidian University,~China
        \\\{c89zhou, s97hu, sshen\}@uwaterloo.ca, jie.gao6@carleton.ca, yxliu21@stu.xidian.edu.cn, dr.nan.cheng@ieee.org}
            }
}

\maketitle

\begin{abstract}

\bl{Future 6G networks} are envisioned to support mobile augmented reality (MAR) applications and provide customized immersive experiences for users via advanced service provision. In this paper, we investigate user-centric service provision for edge-assisted MAR to support the timely camera frame uploading of an MAR device by optimizing the spectrum resource reservation. To address the challenge of non-stationary data traffic due to uncertain \bl{user movement} and the complex camera frame uploading mechanism, we develop a digital twin (DT)-based data-driven approach to user-centric service provision. Specifically, we first establish a hierarchical data model with well-defined data attributes to characterize the impact of the camera frame uploading mechanism on the user-specific data traffic. We then design an easy-to-use algorithm to adapt the data attributes used in traffic modeling to the non-stationary data traffic. We also derive a closed-form service provision solution tailored to data-driven traffic modeling with the consideration of potential modeling inaccuracies. \bl{Trace-driven} simulation results demonstrate that our DT-based approach for user-centric service provision outperforms conventional approaches in terms of adaptivity and robustness.

\end{abstract}

% \begin{IEEEkeywords}
% Edge-device collaboration, AR, 3D, digital twin, deep variational inference, model-based reinforcement learning. 
% \end{IEEEkeywords}

\section{Introduction}

Augmented reality (AR), falling under the extended reality spectrum, enables integrating virtual objects seamlessly into the physical surroundings of human users~\cite{shen2023toward}. Driven by the increasing demand for immersive experiences, mobile AR (MAR) accessible on mobile or portable devices such as smart glasses are gaining widespread attention as one of the emerging applications in the 6G era. All MAR applications need the procedure of device pose tracking, which is fundamental for the effective 3D alignment of virtual objects with physical environments but resource-intensive~\cite{chen2023networked}. Solely enabling device pose tracking poses a key challenge for current MAR devices due to their resource limitations such as limited battery power. To realize practical implementation of MAR, edge-assisted MAR leveraging the resources of edge servers through wireless links becomes a promising paradigm~\cite{chen2023adaptslam}.

An advanced feature that \bl{future} 6G networks may enable for edge-assisted MAR is achieving user-centric service provision to support timely user interactions between MAR devices and edge servers. While service provision is a classic research topic from the networking perspective~\cite{sun2024knowledge}, MAR applications featuring extensive human involvement that deeply affects resource demands, thereby necessitating more effective resource management strategies in 6G networks \bl{due to the following two reasons}. First, \bl{differences in user movement such as head turning} result in significantly distinctive network resource demands for different users using the same MAR application~\cite{ran2020multi}. Traditional service provision approaches relying on service-based demand modeling, e.g., video traffic modeling, fail to distinguish service demands across MAR users~\cite{navarro2020survey}, thereby compromising the flexibility of networks in supporting personalized MAR user experiences in the 6G era. Second, to deal with the uncertainties in human movement, MAR has incorporated a complex operational mechanism, e.g., simultaneous localization and mapping (SLAM)-based device pose tracking, from an application perspective to ensure immersive user experiences~\cite{campos2021orb}, which significantly complicates the demand modeling from the networking perspective. Conventional service-based demand modeling techniques struggle to capture the impact of the operational mechanism underlying MAR applications on resource demands, thereby reducing the adaptivity of service provision in accommodating user movement variations~\cite{zhou2024digital}. Therefore, a novel and advanced service provision \bl{for MAR} is essential in the 6G era.

In this paper, we investigate a service provision problem to facilitate edge-assisted MAR device pose tracking \bl{in future 6G networks}. However, two challenges arise. First, the MAR operational mechanism is highly intricate, typically involving multiple interacting functionality modules~\cite{linowes2017augmented}. The impact of multiple factors inherent in the MAR operational mechanism significantly complicates the modeling of the uplink data traffic in MAR. Second, temporal variations in user movement may lead to non-stationary uplink data traffic. For example, the data traffic load for uploading camera frames may surge intermittently \bl{due to the need of dealing with device pose tracking losses}~\cite{ran2020multi}. Such variations \bl{compromise} the effectiveness of established data traffic models due to their insufficient adaptability to uncertain user movement. 

To address these challenges, we develop a digital twin (DT)-based approach that facilitates \emph{user-centric} and \emph{data-driven} service provision to support edge-assisted device pose tracking in MAR. Specifically, we establish an MAR user DT (M-UDT) for each individual MAR device, building on our general DT framework~\cite{shen2021holistic}. The M-UDT is established by \bl{defining} a customized data model to characterize the uplink data traffic from an individual MAR device and various M-UDT functions to continuously manage the data model according to the variations in data traffic. Based on the data provided by the M-UDT, user-centric service provision decisions can be made for each MAR device. The main contributions of this paper are as follows:
    \begin{itemize}
        \item We establish a personalized hierarchical data model, organizing data attributes carefully chosen for MAR, \bl{to capture the implicit impact of the MAR operational mechanism on the uplink data traffic of an MAR user.} 

        \item We propose two machine learning-based methods with different complexities for data traffic modeling. In addition, we design an easy-to-use mechanism \bl{for switching between the two methods} to adapt to non-stationary uplink data traffic in MAR.  

        \item We derive a closed-form resource reservation solution to a service provision problem for an individual MAR device, considering potential inaccuracies in the data-driven traffic modeling, which enhances the robustness of the DT-based service provision approach.
    \end{itemize}

\section{System Model and Problem Formulation}

\subsection{Considered Scenario}

When a user runs an MAR application with an MAR device, the position and orientation (jointly referred to as \emph{3D pose}) of the MAR device change over time due to user movement. The MAR device captures camera frames periodically with a fixed frame rate and tracks its 3D pose based on the captured camera frames, which is crucial for rendering virtual objects at correct locations within the user’s field of view~\cite{campos2021orb}.

An emerging paradigm of edge-assisted device pose tracking in MAR~\cite{chen2023networked,chen2023adaptslam} is shown in Fig.~\ref{system}, wherein an MAR device and an edge server deployed at a base station (BS) collaboratively track the device pose. Specifically, the MAR device is equipped with a lightweight tracking module \bl{for real-time pose calculation}, while the edge server is equipped with a resource-intensive mapping module \bl{for the creation of} a 3D representation of the physical environment (i.e., a 3D map), which \bl{supports the device pose calculation at the MAR device}. 

Edge-assisted device pose tracking consists of four steps~\cite{ben2022edge}: (i) the MAR device selects a subset of recently captured camera frames, termed as \emph{key frames}, and uploads these key frames to the edge server over a wireless communication link; (ii) the mapping module equipped at the edge server updates the 3D map using the uploaded key frames; (iii) the edge server sends the updated 3D map back to the MAR device; and (iv) the tracking module at the MAR device leverages the updated 3D map to locally calculate the device pose \bl{for every camera frame}. The four steps iterate in device pose tracking.

    \begin{figure}[t]
        \centering
        \includegraphics[width=0.30\textwidth]{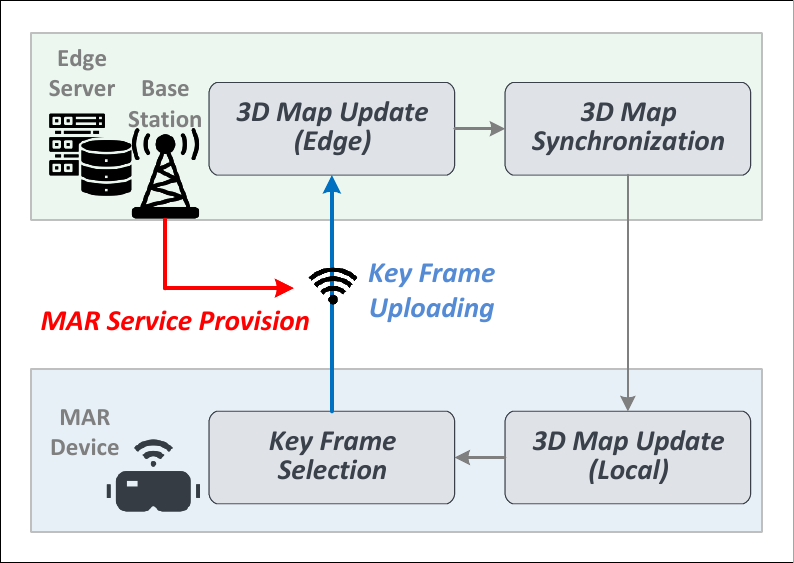}
        \caption{The considered scenario of edge-assisted MAR.}\label{system}
        % \vspace{-0.5cm}
    \end{figure}

\subsection{Key Frame Uploading}\label{}

Let $\mathcal{F}$ denote the set of camera frames captured over the entire considered time domain. The MAR device periodically selects key frames from recently captured camera frames and uploads them to the edge server for updating the 3D map. We refer to the duration of $F$ consecutive camera frames as a time slot and denote the set of all time slots by~$\mathcal{T}$. Let~$\mathcal{F}_{t} \subseteq \mathcal{F}$ denote the set of camera frames captured during time slot~$t \in \mathcal{T}$. At the end of time slot~$t$, the MAR device determines the set of key frames for uploading, denoted by~$\mathcal{K}_{t} \subseteq \mathcal{F}_{t}$. Generally, a key frame differs sufficiently from its preceding camera frames, while there should be sufficient overlap between selected key frames~\cite{campos2021orb}. Due to uncertain user movement and/or variations in the surrounding environment, the operational mechanism of key frame selection and uploading is intricate. Considering that the number of key frames~$\tilde{k}_{t}$ may be time-varying~\cite{ben2022edge}, we model the number of key frames in each time slot as a random variable~$\tilde{k}_{t} = |\mathcal{K}_{t}|$.

Proper resource reservation for \bl{timely} key frame uploading is necessary for real-time device pose tracking. Let~$r_{t}$ denote the uplink data rate of the MAR device within time slot~$t$, given by:
    \begin{equation}\label{}
        r_{t} = b_{t} \log (1 + \gamma_{t}^\text{s}), \,\, \forall t \in \mathcal{T},
    \end{equation}
where $b_{t}$ and~$\gamma_{t}^\text{s}$ represent the amount of spectrum resource reserved to the MAR device for uplink communication and the predicted signal-to-noise ratio, respectively, in time slot~$t$. We denote the volume of data (in bits) to transmit for uploading each camera frame by~$\alpha$, assuming the same data volume for all camera frames. Given uplink data rate~$r_{t}$, the set of key frames selected for uploading in time slot~$t$ should satisfy the following constraint~\cite{atawia2016joint}:
  
    \begin{equation}\label{eq2}
        P ( T^\text{r} r_{t} \ge \alpha \tilde{k}_{t} ) \ge \varepsilon, \,\, \forall t \in \mathcal{T},
    \end{equation}
where~$T^\text{r}$ represents the maximum tolerable \bl{total transmission duration} for uploading the selected key frames \bl{before the end of each time slot}, and $\varepsilon \in [0, 1]$ represents the \bl{required reliability in MAR service provision}.

\subsection{3D Map Update \& Synchronization}\label{}

A 3D map used for edge-assisted device pose tracking consists of a set of key frames uploaded by the MAR device over time as well as the feature points (FPs), e.g., a wall corner, detected from each key frame. Given a camera frame~$f \in \mathcal{F}_{t}$, we denote the set of FPs identified in this camera frame by~$\mathcal{U}_{f}$. Since the MAR device periodically uploads newly key frames to the edge server, the 3D map maintained by the edge server changes over time. Let~$\mathcal{K}^\text{map}_{t} \subseteq \mathcal{F}$ denote the set of key frames stored in the 3D map in time slot~$t$, evolving as follows:  
    \begin{equation}\label{eq3}
        \mathcal{K}^\text{map}_{t} = \left\{ \mathcal{K}^\text{map}_{t-1} \cup \mathcal{K}_{t-1} \right\} \backslash \mathcal{C}_{t-1}, \,\, \forall t-1, t \in \mathcal{T},
    \end{equation}
where~$\mathcal{C}_{t-1} \subseteq \mathcal{K}^\text{map}_{t-1}$ represents the set of key frames removed from the 3D map~$\mathcal{K}^\text{map}_{t-1}$ maintained by \bl{the} edge server in time slot~$t-1$. 
% We assume that the 3D map management, including the determination of~$\mathcal{K}^\text{map}_{t}$, can be accomplished at the end of each time slot. 
The set~$\mathcal{K}^\text{map}_{t}$ and the set of FPs corresponding to each key frame, jointly representing the updated local 3D map, are downloaded by the MAR device. Generally, in MAR applications, selecting the set~$\mathcal{K}_{t}$ from the set of newly captured frames~$\mathcal{F}_{t}$ requires information on the updated local 3D map at time slot~$t$.

\subsection{Problem Formulation}

To efficiently support edge-assisted device pose tracking in MAR, we formulate a service provision problem with the objective of minimizing the amount of spectrum resource reserved for key frame uploading, as follows:

    \begin{subequations}\label{p1}
        \begin{align}
            \textrm{P1:} &\,\, \min_{ \{ b_{t} \}_{t \in \mathcal{T}} } \sum_{ t \in \mathcal{T} }{ b_{t} }\\
            \textrm{s.t.} &\,\, P( T^\text{r} r_{t} \ge \alpha \tilde{k}_{t} ) \ge \varepsilon, \,\, \forall t \in \mathcal{T},
            % & \,\, |\mathcal{D}^\text{e}_{t}| = D, \,\, \forall t \in \mathcal{T},
        \end{align}
    \end{subequations}
where the optimization variable~$b_{t}$ corresponds to the amount of the reserved \bl{spectrum} resource for key frame uploading in each time slot. Constraint~(\ref{p1}b) ensures the transmission duration for key frame uploading. Problem~P1 is intractable since~$\tilde{k}_{t}$ is unknown~\emph{a priori}, and \bl{temporal variations in data traffic of each MAR device may be non-stationary}. \bl{Specifically,} conventional approaches fall into using either mathematical modeling or data-driven prediction, to achieve the on-demand resource reservation by accurately modeling the uplink data traffic~\cite{navarro2020survey}. However, these approaches are designed for general network resource reservation problems and, thus, may overlook the impact of the specific MAR operational mechanism~\cite{chen2023adaptslam,ben2022edge}, on uplink data traffic load. Additionally, they may struggle to adapt to non-stationary traffic variations due to using a single data traffic model.

We develop a digital twin (DT)-based approach to characterize the impact of the MAR operational mechanism on the data traffic of an individual MAR device, thereby enabling user-centric service provision.

\section{The Developed Digital Twin-based Approach}

In this section, we establish {an} MAR user DT (M-UDT) for the MAR device, and our M-UDT design evolves from the framework presented in~\cite{shen2021holistic,zhou2024digital,hu2023adaptive}. The M-UDT, comprising an \emph{MAR user profile} (MUP) and following \emph{UDT functions}, is deployed at the BS and maintained by the controller to facilitate MAR service provision. 

\subsection{Data-driven Demand Modeling Function (DMF)}

User-centric service provision requires an accurate model for capturing the uplink data traffic pattern of the individual MAR device. To obtain such \bl{a data traffic model}, we employ a Markov decision process to abstract the sequential decision making underlying the key frame uploading of the MAR device. Define state~$\mathbf{s}_{t} \in \mathcal{S}$, action~$\mathbf{a}_{t} \in \mathcal{A}$, state transition probability function~$P(\mathbf{s}_{t+1}|\mathbf{s}_{t}, \mathbf{a}_{t}) := \mathcal{S} \times \mathcal{A} \rightarrow \mathcal{S} $, and policy~$\pi(\mathbf{a}_{t}|\mathbf{s}_{t}) := \mathcal{S} \rightarrow \mathcal{A}$. We use the selected set of key frames $\mathcal{K}_{t}$ to define the action in time slot~$t$, denoted by~$\mathbf{a}_{t} = [a_{t,f}]_{\forall f \in \mathcal{F}_{t}} \in \mathcal{A}$, where $a_{t,f} = 1$ if $f \in \mathcal{K}_{t}$, and $a_{t,f} = 0$ otherwise. Given action~$\mathbf{a}_{t}$, the corresponding data traffic \bl{load} for key frame uploading can be determined. 

To model the data traffic, we denote the policy of key frame uploading \bl{that is actually used in the considered MAR application and affected by the MAR operational mechanism~\cite{zhou2024digital} by $\pi^\text{A}$.} To approximate~$\pi^\text{A}$ \bl{accurately}, states~$\mathbf{s}_{t}$ need to be carefully defined since factors influencing key frame uploading in MAR may be implicit and intricate. Therefore, we introduce two types of states for \emph{detailed} and \emph{simplified} traffic modeling, respectively. In addition to the approximation of the actual policy~$\pi^\text{A}$, the established UDT function should approximate the state transition \bl{probabilities} to support data traffic modeling over multiple time slots.

\subsubsection{Detailed Modeling}

In MAR applications, the set~$\mathcal{K}_{t}$ is determined based on the correlation among key frames in 3D map~$\mathcal{K}^\text{map}_{t-1}$ and the correlation among camera frames in set~$\mathcal{F}_{t}$. To characterize the impact of such correlations on key frame uploading, we define 3D map~$\mathcal{K}^\text{map}_{t}$ as a weighted undirected graph denoted by~$\mathcal{G}^\text{map}_{t} = (\mathcal{K}^\text{map}_{t}, \mathcal{E}^\text{map}_{t})$, where~$\mathcal{E}^\text{map}_{t}$ represents the set of edges between every pair of camera frames in~$\mathcal{K}^\text{map}_{t}$. For edge~$e = (f, f') \in \mathcal{E}^\text{map}_{t}$ connecting camera frames~$f, f' \in \mathcal{K}^\text{map}_{t}$, the weight of edge~$e$ is defined as the Jaccard coefficient~\cite{khosoussi2019reliable}:
    \begin{equation}\label{}
        \epsilon_{f, f'} = \frac{|\mathcal{U}_{f} \cap \mathcal{U}_{f^{'}}|}{|\mathcal{U}_{f} \cup \mathcal{U}_{f^{'}}|}, \,\,\,\, \forall\,\, \mathcal{U}_{f} \cup \mathcal{U}_{f^{'}} \neq \emptyset,
    \end{equation} 
where $\cap$ and $\cup$ denote the intersection and the union of two sets, respectively. The Jaccard coefficient~$\epsilon_{f, f'}$ quantifies the similarity of the two sets. If the two sets of FPs~$\mathcal{U}_{f}$ and $\mathcal{U}_{f'}$ are similar, the weight,~$\epsilon_{f, f'} \in [0,1]$ is large. Similarly, we define the graph for set~$\mathcal{F}_{t}$ as~$\mathcal{G}_{t} = (\mathcal{F}_{t}, \mathcal{E}_{t})$. We define~$\mathbf{s}^\text{d}_{t} = [\mathcal{G}^\text{map}_{t-1}, \mathcal{G}_{t}]$ as \bl{the state in the detailed modeling} and find a graph convolutional network (GCN), denoted by~$\pi^\text{d}$, with parameters~$\boldsymbol{\vartheta}^\text{d}$ to approximate policy~$\pi^\text{A}$ by minimizing the following loss function:
    \begin{equation}\label{}
        L(\boldsymbol{\vartheta}^\text{d}) = \frac{1}{|\Xi|} \sum_{(\mathbf{a}_{t},\mathbf{s}^\text{d}_{t}) \in \Xi} \left(\mathbf{a}_{t} - \pi^\text{d}(\mathbf{s}^\text{d}_{t};{\boldsymbol{\vartheta}^\text{d}}) \right)^{2},
    \end{equation}
where~$\Xi$ represents a set containing historical information on actions and states, stored in the MUP. 

\subsubsection{State Transition Modeling}

To support long-term service provision, the DMF models state transitions~$P(\mathbf{s}^\text{d}_{t+1}| \mathbf{s}^\text{d}_{t}, \mathbf{a}_{t})$. 

Due to the fact that newly arrived camera frames in~$\mathcal{F}_{t}$ do not depend on 3D map~$\mathcal{K}^\text{map}_{t}$, and~$P(\mathcal{G}^\text{map}_{t}| \mathcal{G}^\text{map}_{t-1}, \mathbf{a}_{t})$ is known according to \eqref{eq3}. Therefore, to model state transitions, we focus on approximating~$P(\mathcal{G}_{t}| \mathcal{G}_{t-1})$ by using \bl{another} GCN~$\phi(\mathcal{G}_{t-1};\boldsymbol{\theta})$ with parameters~$\boldsymbol{\theta}$. Note that this GCN needs to output only the weights of edges between camera frames, instead of raw images, which \bl{can be} categorized as the link prediction in graph theory.  

\subsubsection{Simplified Modeling}

Although the detailed modeling incorporates the impacts of 3D map and historical camera frames, \bl{excessive input data may introduce redundancy and thus decrease the modeling accuracy. For example, the procedure of key frame selection and uploading in the MAR operational mechanism for device pose tracking is simple when the variation in device pose is insignificant~\cite{campos2021orb,ben2022edge}.} To deal with this issue, we propose a simplified data-driven modeling \bl{as an alternative}. Define~$\mathbf{s}^\text{s}_{t} = [\mathbf{a}_{i}]_{\forall t- T^\text{w} \leq i < t}$ as a state in the simplified modeling at time slot~$t$, which includes the actions conducted in the \bl{preceding}~$T^\text{w}$ time slots. In this case, the approximation of the policy~$\pi^\text{A}$ can be simplified as conventional temporal sequence prediction. We build a recurrent neural network~$\pi^\text{s}$ with parameters~$\boldsymbol{\vartheta}^\text{s}$ and realize the approximation using the following loss function:
    \begin{equation}\label{}
        L(\boldsymbol{\vartheta}^\text{s}) = \frac{1}{|\Xi|} \sum_{(\mathbf{a}_{t},\mathbf{s}^\text{s}_{t}) \in \Xi} \left(\mathbf{a}_{t} - \pi^\text{s}(\mathbf{s}^\text{s}_{t};{\boldsymbol{\vartheta}^\text{s}}) \right)^{2}.
    \end{equation}
Since state~$\mathbf{s}^\text{s}_{t}$ consists of only previous actions, state transitions are straightforward and do not require additional modeling.

\subsection{Model Switching Function (MSF)}

    \begin{algorithm}[t] 
        \caption{Model Switching Method}\label{alg1}
        \LinesNumbered
        \textbf{Input:} $M$, $\delta$;\\
        \textbf{Initialization:} $h_{1} = 1, m_{1} = 0$;\\
        \For{$t \in \mathcal{T}$}
        {       
            $\Delta_{t} = |\mathcal{K}_{t-1}| - |\mathcal{K}_{t-2}|$;\\
            \eIf{$\Delta_{t} > \delta$}
                {
                    $h_{t} \leftarrow 1$; $m_{t} \leftarrow 0$;\\
                }
                {   
                    $m_{t} \leftarrow m_{t-1}+1$;\\
                    \If{$m_{t} \ge M$}
                    {
                       $h_{t} \leftarrow 0$; $m_{t} \leftarrow 0$;\\   
                    } 
                    
                }
        }
        \textbf{Output:} $h_{t}$

    \end{algorithm} 

The MSF function is designed to accurately adapt the data-driven DMF to non-stationary uplink data traffic via flexible model switching. In MAR applications, when variations in the physical environment and user movement are insignificant, the MAR operational mechanism of key frame selection and uploading is simple, leading to relatively stable uplink traffic; Conversely, a significant variation such as a variation leading to pose tracking loss generally \bl{complicates} the MAR operational mechanism, potentially resulting in bursts of key frame uploading. Define~$h_{t} \in \left\{0, 1\right\}$ as an indicator for model switching. If~$h_{t} = 1$, the detailed model is used at time slot~$t$; Otherwise, the simplified model is used. We provide an easy-to-use model switching mechanism in Algorithm~\ref{alg1} based on the temporal variation in the number of uploaded key frames. Parameters~$\delta$ and $M$ jointly determine the switching condition, which can be adjusted flexibly according to user movement and user-specific psychical environment.

\subsection{MAR User Profile (MUP)}

The MUP offers a user-centric \emph{data model} consisting of a number of data elements \bl{that are carefully defined} and organized in a structured way. The data model can implicitly characterize the complex impacts of data elements pertinent to the MAR operational mechanism on the resource demand from an individual MAR device~\cite{shen2021holistic,ma2023nomore}. The designed DMF and MSF can update the MUP via updating data elements in the data model, thereby facilitating MAR service provision.

As shown in Fig.~\ref{data_model}, we build a hierarchical data model to support MAR service provision. At the top level of this hierarchy, there is a ``user terminal'' representing an MAR device such as smart glasses. An individual MAR device consists of a number of ``functional units'', each relating to a unique functionality, e.g., tracking or rendering, in the MAR application. Each functional unit contains a set of purposefully chosen ``data attributes'' related to \bl{the MAR operational mechanism of that functional unit}. Although this paper considers service provision, for a single functional unit (i.e., device pose tracking), the data model has the flexibility and scalability to adapt to various MAR functionalities and network management objectives.

    \begin{figure}[t]
        \centering
        \includegraphics[width=0.42\textwidth]{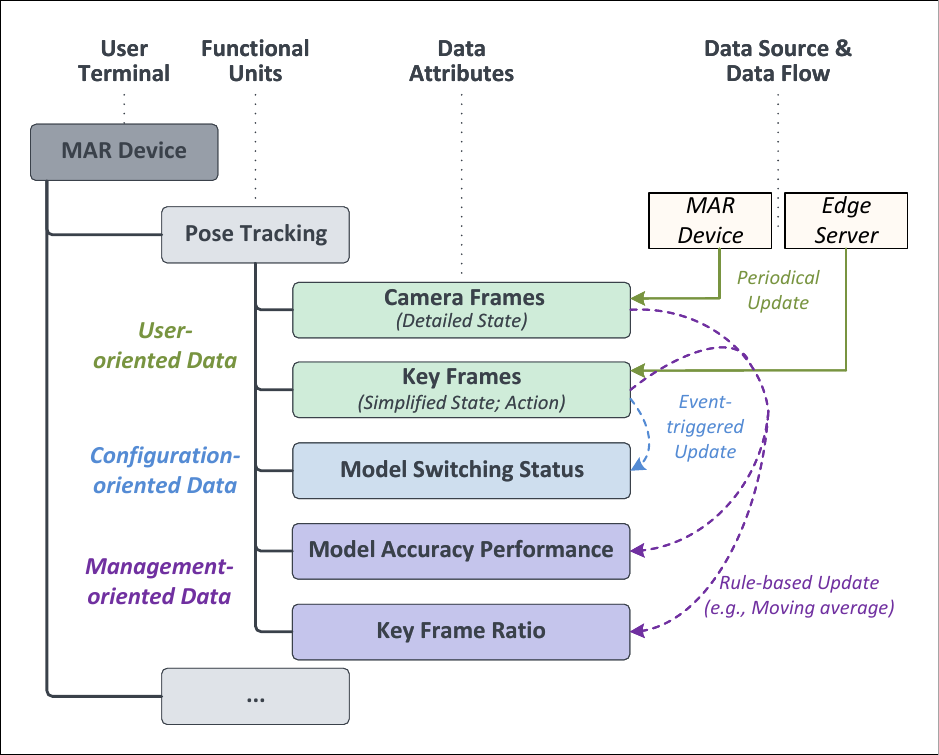}
        \caption{The hierarchical data model in the MUP.}\label{data_model}
        % \vspace{-0.5cm}
    \end{figure}

The data flows within the UDT for MUP update vary across different data attributes depending on the purposes for which the data are used. We classify data in this MUP into three categories: i) User-oriented data, e.g.,~$\mathbf{s}^\text{d}_{t}$ and $\mathbf{a}_{t}$, that are used to characterize the service demand of an individual MAR device and can be periodically collected; ii) Configuration-oriented data, e.g.,~$h_{t}$, that are used to configure the DMF and MSF and may be updated based on the change of user-oriented data in an event-triggered way; and iii) Management-oriented data, e.g., model accuracy, that are used to enable user-centric service provision and obtained from the statistical analysis of user-oriented data given a predefined rule, which will be introduced in Subsection~\ref{sec34}.   

\subsection{M-UDT-based User-centric Service Provision}\label{sec34}

Unlike traditional mathematical models that offer a stochastic representation of data traffic to guide service provision, the M-UDT employs data-driven traffic modeling that outputs predicted data traffic volumes. Currently, neither mathematical models nor data-driven models achieve the absolute modeling accuracy~\cite{navarro2020survey}. To address the potential inaccuracies of the M-UDT in data traffic modeling, we propose a robust service provision method tailored to data-driven traffic modeling.

Define~$\hat{a}_{t,f}$ as the prediction value of~$a_{t,f}$ via the M-UDT. The optimal M-UDT-based service provision solution to Problem~P1 is as follows:
    \begin{equation}\label{}
      b^{*}_{t} = \frac{\alpha T^\text{r}}{\log (1 + \gamma^\text{s}_{t})}  N^{*}_{t},\,\, \forall t \in \mathcal{T},
    \end{equation}
where~$N^{*}_{t}$ denotes the minimum value of $N_{t}$, given by:
    \begin{equation}\label{}
      N^{*}_{t} =  \arg \min_{N_{t}} P( N_{t} \ge \sum_{f \in \mathcal{F}_{t}}{a_{t,f}} | \hat{\mathbf{a}}_{t}  ) \ge \varepsilon, 
    \end{equation}
where~$\hat{\mathbf{a}}_{t} = [\hat{a}_{t,f}]_{\forall f \in \mathcal{F}_{t}}$. To determine~$N^{*}_{t}$, we need to obtain the conditional probability~$P( N_{t} \ge \sum_{f \in \mathcal{F}_{t}}{a_{t,f}} | \hat{\mathbf{a}}_{t}  )$. Without loss of generality, we assume that, given~$\hat{a}_{t,f}, \forall f \in \mathcal{F}_{t}$, random variables~$a_{t,f}, \forall f \in \mathcal{F}_{t}$ are independent and identically distributed (i.i.d.), and~$P(a_{t,f} | \hat{\mathbf{a}}_{t}) = P(a_{t,f} | \hat{a}_{t,f}), \forall f \in \mathcal{F}_{t}$. Define the following three parameters: model accuracy performance~$p_{t} = P(\hat{a}_{t,f}=1 | a_{t,f}=1)$, ~$q_{t} = P(\hat{a}_{t,f}=0 | a_{t,f}=0)$, and key frame ratio~$\lambda_{t} = P(a_{t,f}=1)$.

    \begin{theorem}\label{theorem1}
         The probability~$P(N_{t} \ge \sum_{f \in \mathcal{F}_{t}}{a_{t,f}} | \hat{\mathbf{a}}_{t} )$ given prediction results from the M-UDT, can be derived in~\eqref{eq30}, which is non-decreasing, where $\hat{A} = \sum_{f \in \mathcal{F}_{t}}{\hat{a}_{t,f}}$, $F = |\mathcal{F}_{t}|$,
            \begin{equation}\label{}
               p_{t}^\text{TPR} = \frac{p_{t}\lambda_{t}}{p_{t}\lambda_{t} +(1-q_{t})(1-\lambda_{t})},
            \end{equation}
        and
            \begin{equation}\label{}
               p_{t}^\text{TNR} = \frac{q_{t}(1-\lambda_{t})}{q_{t}(1-\lambda_{t}) + (1-p_{t})\lambda_{t}}.
            \end{equation}
    \end{theorem}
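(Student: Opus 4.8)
The plan is to recover the theorem in two moves: first invert the predictor with Bayes' rule to obtain the per-frame posterior probabilities, and then assemble the posterior law of the key-frame count as a convolution of two independent binomials, from which the claimed closed form and its monotonicity follow immediately.

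\textbf{Step 1 (posterior per-frame probabilities).} Because of the stated assumption $P(a_{t,f}\mid\hat{\mathbf{a}}_t)=P(a_{t,f}\mid\hat{a}_{t,f})$, I would work one frame at a time. Applying Bayes' rule together with the law of total probability to the definitions of $p_t$, $q_t$, and $\lambda_t$ gives $P(\hat{a}_{t,f}=1)=p_t\lambda_t+(1-q_t)(1-\lambda_t)$ and $P(\hat{a}_{t,f}=0)=q_t(1-\lambda_t)+(1-p_t)\lambda_t$, whence $P(a_{t,f}=1\mid\hat{a}_{t,f}=1)=p_t^{\text{TPR}}$ and $P(a_{t,f}=0\mid\hat{a}_{t,f}=0)=p_t^{\text{TNR}}$, exactly the expressions in the statement; in particular $P(a_{t,f}=1\mid\hat{a}_{t,f}=0)=1-p_t^{\text{TNR}}$.

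\textbf{Step 2 (decomposition into two binomials).} Conditioning on $\hat{\mathbf{a}}_t$, I would split $\mathcal{F}_t$ into the $\hat{A}=\sum_{f}\hat{a}_{t,f}$ frames predicted positive and the remaining $F-\hat{A}$ frames predicted negative. By the conditional i.i.d.\ assumption, on the first group the indicators are independent $\mathrm{Bernoulli}(p_t^{\text{TPR}})$ and on the second group independent $\mathrm{Bernoulli}(1-p_t^{\text{TNR}})$, and the two groups are mutually independent. Hence $\sum_f a_{t,f}=X+Y$ with $X\sim\mathrm{Bin}(\hat{A},p_t^{\text{TPR}})$ and $Y\sim\mathrm{Bin}(F-\hat{A},1-p_t^{\text{TNR}})$ independent, so $P(N_t\ge\sum_f a_{t,f}\mid\hat{\mathbf{a}}_t)=P(X+Y\le N_t)$. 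Expanding the convolution $P(X+Y=n)=\sum_j P(X=j)P(Y=n-j)$ and summing over $n\le N_t$ produces precisely the asserted double sum, with the convention that binomial coefficients vanish for out-of-range indices absorbing the admissible summation limits.

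\textbf{Step 3 (monotonicity) and the main obstacle.} For fixed $\hat{\mathbf{a}}_t$, the quantity $P(X+Y\le N_t)$ is the CDF of the nonnegative integer-valued variable $X+Y$ evaluated at $N_t$; incrementing $N_t$ adds the nonnegative term $P(X+Y=N_t+1)$, so it is non-decreasing in $N_t$, which is exactly what makes $N_t^{*}$ well defined. There is no conceptual difficulty here; the only care needed is bookkeeping — carrying out the two Bayes inversions without sign slips, getting the index ranges of the binomial convolution right, and observing that the degenerate cases in which a normalizing denominator such as $p_t\lambda_t+(1-q_t)(1-\lambda_t)$ vanishes correspond to zero-probability prediction events and may be excluded.
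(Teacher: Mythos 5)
Your proof is correct and reconstructs exactly the argument the paper evidently intends (its own proof is omitted): the Bayes inversions yield precisely the stated $p_t^\text{TPR}$ and $p_t^\text{TNR}$, and the split of $\mathcal{F}_t$ into the $\hat{A}$ predicted-positive and $F-\hat{A}$ predicted-negative frames gives the convolution of $\mathrm{Bin}(\hat{A},p_t^\text{TPR})$ and $\mathrm{Bin}(F-\hat{A},1-p_t^\text{TNR})$ whose CDF is exactly~\eqref{eq30}, with monotonicity immediate. No gaps; your handling of the summation limits and the degenerate zero-denominator cases is the right bookkeeping.
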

    \begin{proof}   
        Omitted due to the limit of space.
    \end{proof}
Theorem~1 allows us to derive a closed-form solution of~$N^{*}_{t}$ given parameters~$p_{t}$,~$q_{t}$, and~$\lambda_{t}$. The three parameters, representing the management-oriented data stored in the MUT, can be \bl{updated per time slot according to} user-oriented data, i.e.,~$\hat{\mathbf{a}}_{t}$ and~$\mathbf{a}_{t}$ following a moving-average rule.

    \begin{figure}[t]
        \centering
        \includegraphics[width=0.44\textwidth]{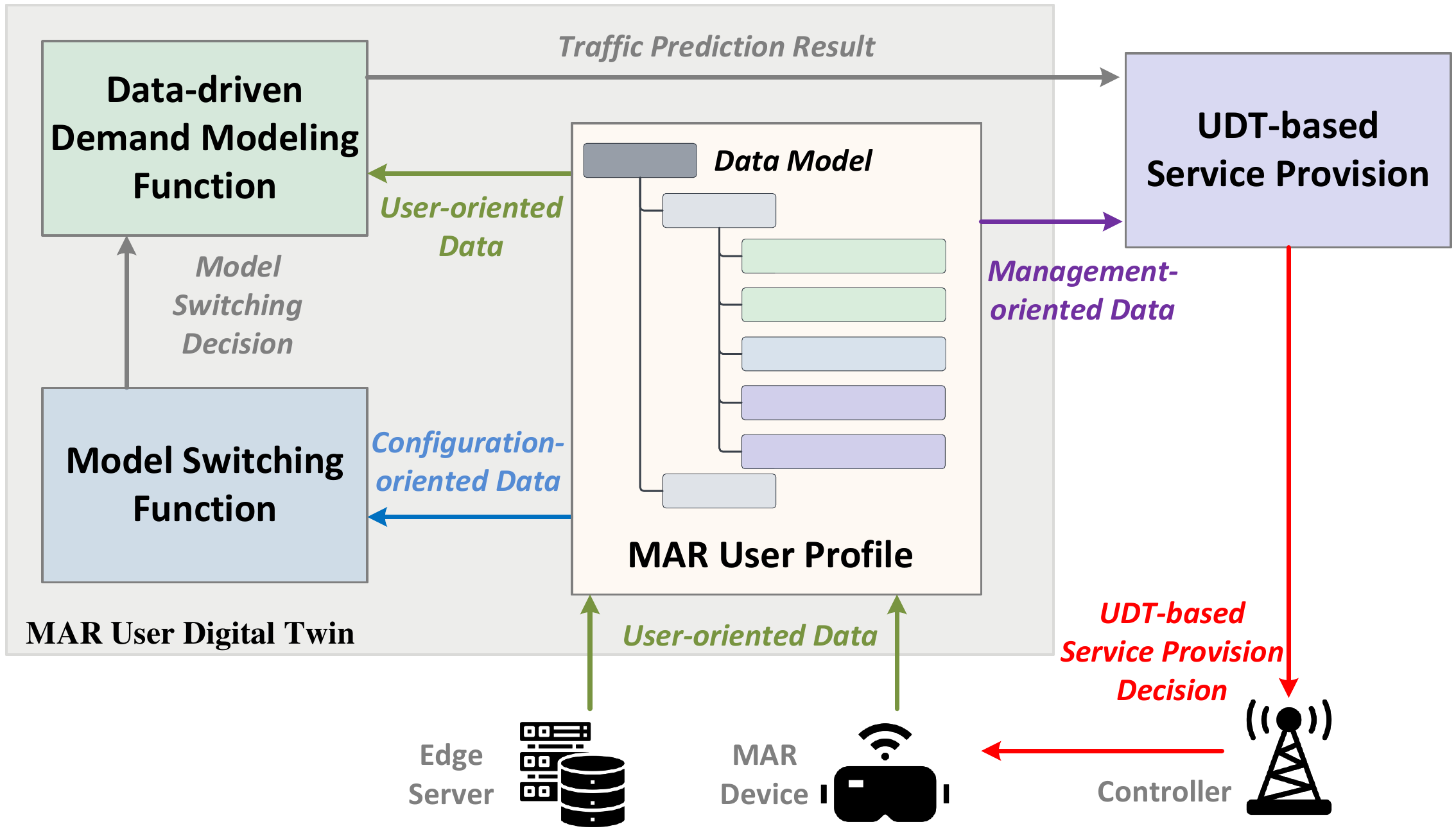}
        \caption{The workflow of the developed DT-based approach.}\label{udt}
        % \vspace{-0.5cm}
    \end{figure} 

        \begin{figure*}[t] 
        \begin{equation}\label{eq30}
            \begin{split}
                & g(N_{t}; p_{t}, q_{t}, \lambda_{t}) = \sum_{k=0}^{N_{t}}{ \sum_{j = \max(0, k-(F-\hat{A}))}^{\min(\hat{A},k)}{\binom{\hat{A}}{j} (p_{t}^\text{TPR})^{j}(1-p_{t}^\text{TPR})^{\hat{A}-j} \binom{F-\hat{A}}{k-j} (1-p_{t}^\text{TNR})^{k-j}(p_{t}^\text{TNR})^{F-\hat{A}-k+j} }},
            \end{split}
        \end{equation}
        \rule[-10pt]{18.15cm}{0.05em}  
        % \vspace{-0.5cm}
    \end{figure*}

We show the workflow of our \bl{M-UDT-based service provision approach} in Fig.~\ref{udt}. The MUP comprises the data model with structured user data essential for service provision. The designed DMF and MSF enable the data update in the MUP, thereby enabling the user-centric service provision. 

\section{Performance Evaluation}

\subsection{Simulation Settings}

In our simulation, we use 218 camera frame sequences, corresponding to different user movement in various environments, from the InteriorNet dataset~\cite{li2018interiornet} and conduct device pose tracking for the MAR device using the open-source ORB-SLAM3 platform~\cite{campos2021orb}. We use a resource block (RB) as the base unit for spectrum resource, each of which is 180\,kHz wide (12 subcarriers) in bandwidth and 0.5\,ms long in time. Other important parameter settings are listed in Table~I.

We adopt the following prevalent data traffic modeling approaches as benchmark:
    \begin{itemize}
        \item \emph{Poisson regression}: The number of key frames for uploading in each time slot is assumed to follow a Poisson distribution. The parameter of the Poisson distribution \bl{is} estimated based on historical information;

        \item \emph{LSTM neural network}: Following the simplified modeling in the DMF, an LSTM neural network is pre-trained and employed to predict the \bl{number} of key frames that need to be uploaded in each time slot.
    \end{itemize}

    \begin{table}[t]
        \footnotesize 
        % \normalsize
        \centering
        \captionsetup{justification=centering,singlelinecheck=false}
        \caption{Simulation Parameters}\label{table1}
        \begin{tabular}{c|c|c|c}
            \hline\hline
             Parameter & Value & Parameter & Value\\
             \hline\hline
             $F$ & 10\,frames & $T^\text{r}$ & 0.02\,second \\
             \hline
             $\alpha$ & 5\,Mbits & $\gamma_{t}^\text{s}$ & 15\,dB\\
             \hline
             $\delta$ & 4 & $M$ & 3\\
             \hline
        \end{tabular}
        % \vspace{-0.5cm}
    \end{table}

\subsection{Performance of the M-UDT-based Approach}

    \begin{figure}[t]
        \centering
        \includegraphics[width=0.35\textwidth]{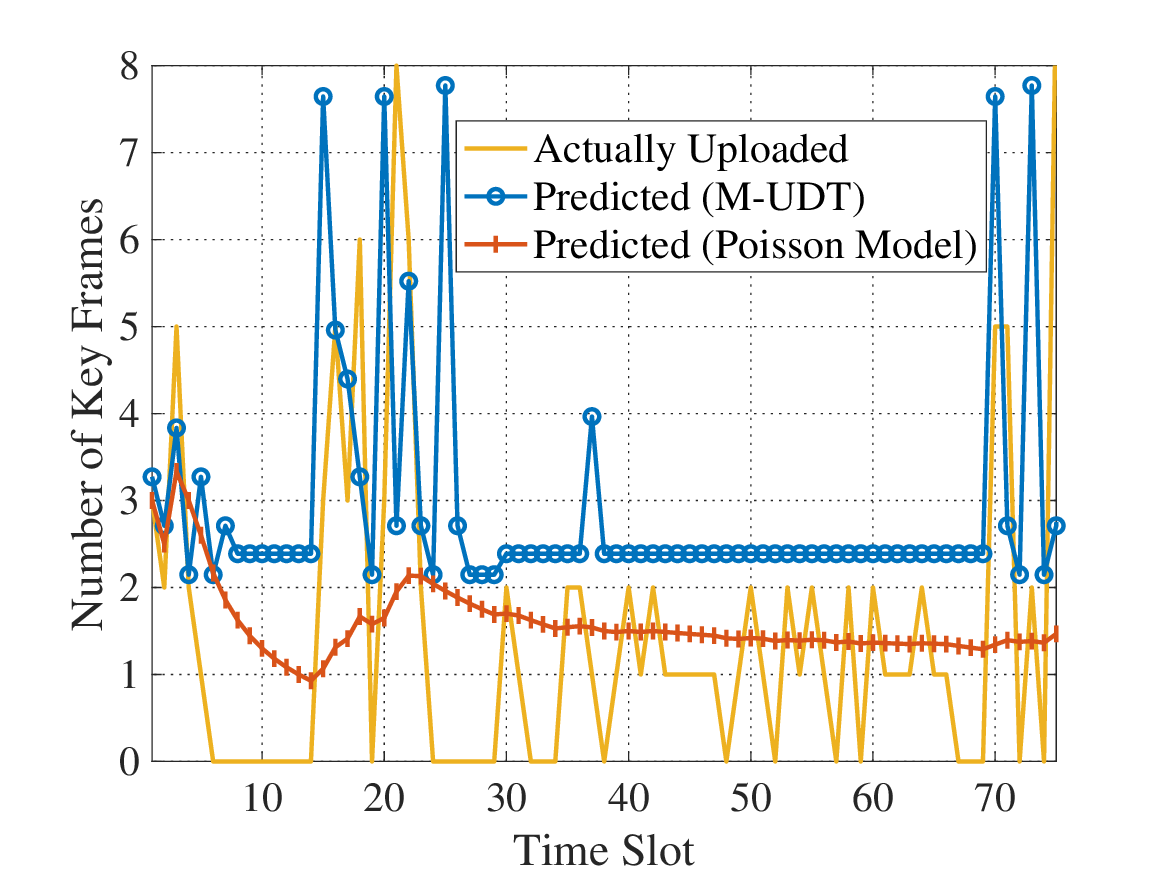}
        \caption{Data traffic modeling performance comparison.}\label{fig1}
        % \vspace{-0.5cm}
    \end{figure}

In Fig.~\ref{fig1}, we compare the traffic modeling performance of M-UDT with that of Poisson regression, \bl{labeled as ``Predicted (Poisson Model)'',} over one camera frame sequence. We can observe that the predicted values by the M-UDT more closely match the actual non-stationary uplink data traffic, particularly during \bl{bursts} in uplink data traffic that may result from device tracking loss or changes in the physical environment. This is because the M-UDT can switch between detailed and simplified data-driven modeling according to variations in the number of uploaded key frames, thereby capturing \bl{the implicit impact of the MAR operational mechanism on data traffic load} while reducing input data redundancy in the detailed modeling.

In Fig.~\ref{fig2}, we compare the service provision performance of the M-UDT-based approach with that of the LSTM-based approach (labeled as ``LSTM'') in terms of spectrum resource utilization and delay satisfaction. Given different \bl{tolerable transmission duration} for uploading the selected key frames, i.e.,~${T}^\text{r}$, we plot the amount of over-provisioned spectrum resource (in RBs) in the two approaches. From the figure, we can observe that, due to the high accuracy of the \bl{M-UDT} in data traffic modeling, our M-UDT-based approach not only reduces \bl{the amount of over-provisioned} spectrum resource but also ensures the timeliness of key frame uploading for the MAR device, leading to advanced user-centric service provision.

    \begin{figure}[t]
        \centering
        \includegraphics[width=0.35\textwidth]{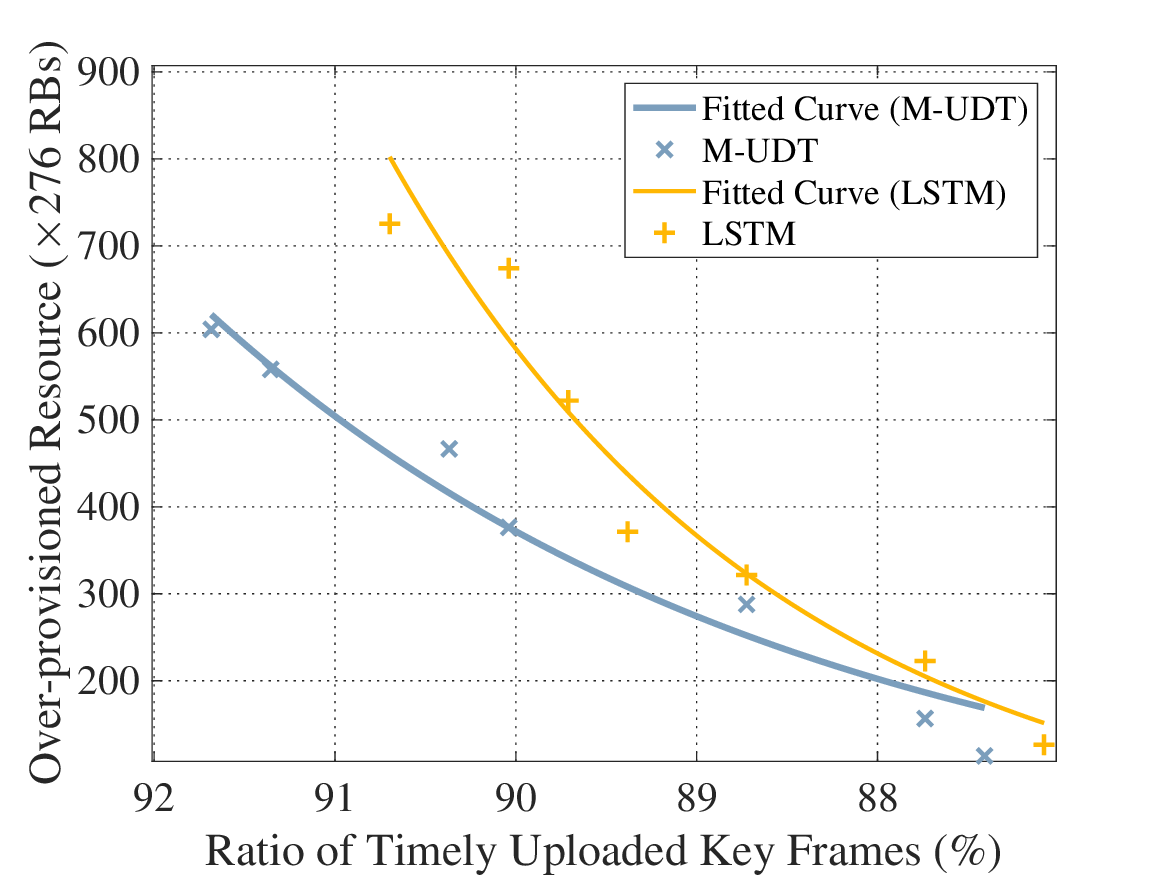}
        \caption{Service provision performance of the M-UDT-based approach.}\label{fig2}
        % \vspace{-0.5cm}
    \end{figure}

\section{Conclusion and Future Work}

In this paper, we have developed a data-driven service provision approach based on the M-UDT to support customized user experiences in edge-assisted MAR. In the M-UDT, the established hierarchical data model organizes the factors affecting user-specific data traffic, and the designed UDT functions enable the switching between two data-driven traffic models to adapt to non-stationary data traffic. Simulation results have demonstrated the effectiveness of the developed M-UDT-based data-driven approach in reducing spectrum resource consumption while satisfying the delay requirement of camera frame uploading due to high modeling accuracy. Our approach provides a scalable and flexible paradigm to characterize the intricate impacts of MAR operational mechanisms on user-specific resource demands, which facilitates the shift to user-centric service provision in the 6G era. In the future, we plan to incorporate service provision for multiple MAR devices with diverse camera frame uploading mechanisms.

\bibliography{ref_AR2}

\bibliographystyle{IEEEtran}

\end{document}